\patchcmd{\bibliographystyle}{#1}{abbrvnat}{}{}
\newcommand{\burnin}{\ell}
\newcommand{\maxK}{\widetilde{K}}
\newcommand{\GibbsCond}[2]{p_{\Pi | \Pi_{-#2}}}
\newcommand{\defined}{:=}
\newcommand{\ind}[1]{\mathds{1}\{#1\}} %
\newcommand{\calP}[1]{\mathcal{P}_{#1}} %
\newcommand{\given}{\,|\,}
\newcommand{\CRP}[2]{\text{CRP}(#1,#2)}
\newcommand{\distNorm}[2]{\mathcal{N}(#1,#2)}
\newcommand{\distDP}[2]{\mathrm{DP}(#1,#2)}
\newcommand{\data}{W}
\newcommand{\iid}{\textrm{i.i.d.}\xspace}
\newcommand{\dist}{\sim}
\newcommand{\distiid}{\overset{\textrm{\tiny\iid}}{\dist}}
 \def\[#1\]{\begin{align}#1\end{align}}        %
\def\(#1\){\begin{align*}#1\end{align*}}     %
\DeclareMathOperator*{\argmin}{arg\,min}
\newcommand{\printfnsymbol}[1]{%
  \textsuperscript{\normalfont\color{blue}\@fnsymbol{#1}}%
}
\title[Optimal Transport Couplings for Gibbs Samplers on Partitions]{Optimal Transport Couplings of Gibbs Samplers on Partitions for Unbiased Estimation}
\author{\\
    \Name{{Brian L.} Trippe}\thanks{These authors contributed equally}\Email{btrippe@mit.edu}\\
    \Name{{Tin D.} Nguyen}\printfnsymbol{1} \Email{tdn@mit.edu}\\
 \Name{Tamara Broderick} \Email{tbroderick@csail.mit.edu}\\
 \addr MIT CSAIL}
\begin{document}

\maketitle

\begin{abstract}
	Computational couplings of Markov chains provide a practical route to unbiased Monte Carlo estimation that can utilize parallel computation.
However, these approaches depend crucially on chains meeting after a small number of transitions.
For models that assign data into groups, e.g.\ mixture models, the obvious approaches to couple Gibbs samplers fail to meet quickly.
This failure owes to the so-called `label-switching' problem; semantically equivalent relabelings of the groups contribute well-separated posterior modes that impede fast mixing and cause large meeting times.
We here demonstrate how to avoid label switching by 
considering chains as exploring the space of partitions rather than labelings.
Using a metric on this space, we
employ an optimal transport coupling of the Gibbs conditionals.
This coupling outperforms alternative couplings that rely on labelings
and, on a real dataset, provides estimates more precise than usual ergodic averages in the limited time regime.
Code is available at \href{https://github.com/tinnguyen96/coupling-Gibbs-partition}{github.com/tinnguyen96/coupling-Gibbs-partition}.

\end{abstract}

\section{Introduction} \label{sec:intro}
\paragraph{Couplings for unbiased Markov chain Monte Carlo.}
Consider estimating an analytically intractable expectation of a function $h$ of a random variable $X$ distributed according to $p$, $H^* \coloneqq \int h(X) p(X) dX$.
Given a Markov chain $\{X_t\}_{t=0}^\infty$ with initial distribution $X_0\sim p_0$ 
and evolving according to a transition kernel $X_t \sim T(X_{t-1}, \cdot)$ stationary with respect to $p$,
one option is to approximate $H^*$ with the empirical average of samples $\{h(X_t)\}$.
However, while ergodic averages are asymptotically consistent, 
they are in general biased when computed from finite simulations. 
As such, one cannot effectively utilize parallelism to reduce error to any desired tolerance.

Computational couplings provide a route to unbiased estimation in finite simulation \citep{glynn2014exact};
in this work we build on the framework of \citet{jacob2020unbiased}.
One designs an additional Markov chain $\{Y_t\}$ with two properties. 
First, $Y_t \given Y_{t-1}$ also evolves using the transition $T(\cdot, \cdot)$, so that $\{Y_t\}$ is equal in distribution to $\{X_t\}$.
Secondly, there exists a random \textit{meeting time} $\tau < \infty$ such that the two chains meet exactly at some time $\tau$, $X_{\tau}= Y_{\tau-1}$, and remain faithful afterwards: for all $t\ge\tau$, $X_{t}= Y_{t-1}$. 
Then, one can compute an unbiased estimate of $H^*$ as
\[\label{eqn:unbiased_estimate}
H_{\burnin:m}(X, Y) \coloneqq
\underbrace{\frac{1}{m-\burnin+1} \sum_{t=\burnin}^m h(X_t) }_{\text{Usual MCMC average}} + 
\underbrace{\sum_{t=\burnin+1}^{\tau - 1} \text{min}\left(1, \frac{t-\burnin}{m-\burnin+1}\right)\left\{h(X_t) - h(Y_{t-1}) \right\}}_{\text{Bias correction}} 
\]
where $\burnin$ is the burn-in length, and $m$ sets a minimum number of iterations \citep[Equation 2]{jacob2020unbiased}.
One interpretation of this estimator is as the usual MCMC estimate plus a bias correction.
Since $H_{\burnin:m}$ is unbiased, we can make the squared error (for estimating $H^*$) arbitrarily small by simply averaging many estimates computed in parallel.
However, the practicality of \Cref{eqn:unbiased_estimate} relies on a coupling that provides sufficiently small meeting times.
Large meeting times are doubly problematic:
they lead to greater computational cost and higher variance due to the additional terms.

\paragraph{Gibbs samplers over discrete structures and their couplings.}
Gibbs sampling is a standard inference method for models with discrete structures and tractable conditional distributions.  
Numerous applications include Bayesian nonparametric clustering using Dirichlet process mixture models \citep{antoniak1974mixtures,neal2000markov}, graph coloring for randomized approximation algorithms \citep{jerrum1998mathematical}, community detection using stochastic block models \citep{holland1983stochastic,geng2019probabilistic} and computational redistricting \citep{DeFord2021Recombination}. 
In these cases, the discrete structure is the partition of data into components. 

While some earlier works have described couplings of Gibbs samplers, they have not sought to address computational approaches applicable in these settings.
For example, \citet{jerrum1998mathematical} uses maximal couplings on labelings to prove convergence rates for graph coloring,
and \citet{gibbs2004convergence} uses a common random number coupling for two-state Ising models.
Notably, these approaches rely on explicit labelings and, in our experiments, suffer from large meeting times.
We attribute this issue to the label-switching problem \citep{jasra2005markov};
heuristically, many different labelings imply the same partition, and two chains may nearly agree on the partition but require many iterations to change label assignments.

\paragraph{Our contribution.}
We view the Gibbs sampler as exploring a state-space of partitions rather than labelings (as, for example, in \citet{tosh2014lower}), and define an optimal transport (OT) coupling in this space.
We show that our algorithm has a fast run time and empirically validate it in the context of Dirichlet process mixture models \citep{antoniak1974mixtures,prabhakaran2016dirichlet}
and graph coloring \citep{jerrum1998mathematical},
where it provides smaller meeting times than the label-based couplings of \citet{jerrum1998mathematical,gibbs2004convergence}.
We demonstrate the benefits of unbiasedness by reporting estimates of the posterior predictive density and cluster proportions. Our implementation is publicly available at \href{https://github.com/tinnguyen96/coupling-Gibbs-partition}{github.com/tinnguyen96/coupling-Gibbs-partition}.

\section{Our Method}\label{sec:method}

\subsection{Gibbs samplers over partitions}\label{sec:gibbs_on_partitions}
For a natural number $N$, a \emph{partition} of $[N]  \defined \{1,2,\ldots,N\}$ is a collection of non-empty disjoint sets $\{A_1,A_2,\ldots,A_k\}$, whose union is $[N]$ \citep[Section 1.2]{pitman2006combinatorial}.
We use $\calP{N}$ to denote the set of all partitions of $[N]$.
Throughout, we use $\pi$ to denote elements of $\calP{N}$ and $\Pi$ for a random partition (i.e.\ a $\calP{N}$-valued random variable) with probability mass function (PMF) $p_{\Pi}$.
Finally $\pi_{-n}$ and $\Pi_{-n}$ denote these partitions with data-point $n$ removed.
For example, if $\pi = \left\{\{1, 3\}, \{ 2\}\right\}$, then $\pi_{-1}=\left\{ \{3 \},\{ 2\} \right\}$.

Drawing direct Monte Carlo samples $\Pi \sim p_{\Pi}$ is often impossible.
However, the conditional distributions $\GibbsCond{N}{n}$ are supported on at most $N$ partitions. 
Hence, when $p_{\Pi}$ is available up to a proportionality constant, computing and sampling from $\GibbsCond{N}{n}$ are tractable operations.
A Gibbs sampler exploiting this tractability proceeds as follows.
First, a partition $\pi$ is drawn from an initial distribution $p_0$ on $\calP{N}$. For each iteration, we \emph{sweep} through each data-point $n\in [N]$, temporarily remove it from $\pi$, and then randomly reassign it to one of the sets within $\pi_{-n}$ or add it as singleton (that is, as a new group) according the conditional PMF $\GibbsCond{N}{n}(\cdot | \pi_{-n})$.

\subsection{Our approach: optimal coupling of Gibbs conditionals} 
\begin{algorithm2e}[!t]
\caption{Gibbs Sweep with Optimal Transport Coupling}
\label{alg:gibbs}
 \LinesNumbered
 \KwIn{Target probability mass function (PMF) $p_\Pi$.  Current partitions $\pi$ and $\nu$.}
\For{$n\leftarrow 1$ \KwTo $N$}{
    // Compute Gibbs marginals (PMFs over partitions) \\
    $q, r \leftarrow \GibbsCond{N}{n}(\cdot | \pi_{-n}), \GibbsCond{N}{n}(\cdot | \nu_{-n})$ \\
    \text{ }\\ %
    // Compute and sample from optimal transport coupling\\
    $[\pi^1, \pi^2, \dots, \pi^K],[\nu^1, \nu^2, \dots, \nu^{K^\prime}] \leftarrow \text{support}(q), \text{support}(r)$ \\
    $\gamma^* = \argmin_{\gamma \in \Gamma(q, r)} \sum_{k=1}^K \sum_{k^\prime=1}^{K^\prime} \gamma(\pi^k, \nu^{k^\prime}) \text{d}(\pi^k, \nu^{k^\prime})$ \\
    $\pi, \nu  \sim \gamma^*$\\
}
\text{Return} $\pi, \nu$
\end{algorithm2e}

Our coupling encourages the chains to become `closer' while maintaining the correct marginal evolution.
To quantify closeness we use a metric on $\calP{N}$.
While a number of metrics exist \citep[Section 2]{meilua2007comparing}, for simplicity we chose a classical metric introduced by \citet{mirkin1970measurement,rand1971objective},
\[\label{eqn:distance_between_partitions}
    \text{d}(\pi, \nu) 
    = \sum_{A\in \pi}|A|^2 
      +  \sum_{B \in \nu} |B|^2
      -2\sum_{A\in \pi, B\in\nu} |A\cap B|^2,
\]
which is equivalent to Hamming distance on the adjacency matrices implied by partitions \citep[Theorems 2-3]{mirkin1970measurement}. 
We leave investigation of the impact of metric choice on meeting time distribution to future work. 

With the metric in \Cref{eqn:distance_between_partitions}, we can formalize an \emph{optimal transport coupling} of two Gibbs conditionals, i.e.\ the coupling that
minimizes the expected distances between the updates.
In particular, we let $q:=\GibbsCond{N}{n}(\cdot | \pi_{-n})$ and 
$r:=\GibbsCond{N}{n}(\cdot | \nu_{-n})$ with supports 
$[\pi^1, \pi^2, \dots, \pi^K] := \text{support}(q)$ and
$[\nu^1, \nu^2, \dots, \nu^{K^\prime}] := \text{support}(r)$
and define the OT coupling as
\[\label{eq:minimal-Ed-coupling}
    \gamma^* :=  \argmin_{\gamma \in \Gamma(q, r)} \sum_{k=1}^K \sum_{k^\prime=1}^{K^\prime} \gamma(\pi^k, \nu^{k^\prime}) \text{d}(\pi^k, \nu^{k^\prime}),
\]
where $\Gamma(q, r)$ is the set of all couplings of $q$ and $r$.
\Cref{alg:gibbs} summarizes this approach.

\subsection{Efficient computation of optimal couplings}
The practicality of our OT coupling depends both on successfully encouraging chains to meet in a small number of steps
and on an implementation with computational cost comparable to running single chains.
If \Cref{alg:gibbs} required orders of magnitude more time than the Gibbs sweep of single chains,
the extent of parallelism required to place the unbiased estimates from coupled chains on an even footing with standard MCMC could be prohibitive.

In many applications, including those in our experiments, for partitions of size $K$, the Gibbs conditionals may be computed in $\Theta(K)$ time,
and a full sweep through the $N$ data-points takes $\Theta(NK)$ time for a single chain.
At first consideration, an implementation of \Cref{alg:gibbs} with comparable efficiency might seem infeasible.
In particular, when $\pi$ and $\nu$ are of size $O(K),$ \Cref{eq:minimal-Ed-coupling} requires computing $O(K^2)$ pairwise distances,
each of which naively might seem to require at least $O(KN)$ operations --- let alone the OT problem.

The following result shows that we can in fact compute this coupling efficiently.
\begin{theorem}[Gibbs Sweep Time Complexity]\label{thm:Gibbs_complexity}
Let $p_{\Pi}$ be the law of a random $N$-partition.
If for any $\pi \in \calP{N}, \GibbsCond{N}{n}(\cdot | \pi_{-n})$ is computed in constant time, 
the Gibbs sweep in \Cref{alg:gibbs} has $O(N\maxK^3 \log \maxK)$ run time, where $\maxK$ is the max partition size encountered.
\end{theorem}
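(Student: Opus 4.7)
The plan is to analyze each line of \Cref{alg:gibbs} for a single value of $n$ and show the per-iteration cost is $O(\widetilde{K}^3 \log \widetilde{K})$, so that the full sweep of $N$ steps runs in $O(N \widetilde{K}^3 \log \widetilde{K})$. Three operations per iteration potentially dominate: (i) assembling the $O(\widetilde{K}) \times O(\widetilde{K})$ cost matrix of pairwise distances $\text{d}(\pi^k, \nu^{k'})$ used in line 6, (ii) solving the resulting optimal transport linear program, and (iii) sampling from $\gamma^*$. Step (ii) can be handled by an off-the-shelf result: on two supports of size $O(\widetilde{K})$, the OT LP is a transportation problem with $O(\widetilde{K})$ nodes and $O(\widetilde{K}^2)$ arcs, which the network simplex method solves in $O(\widetilde{K}^3 \log \widetilde{K})$ time. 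The substance of the proof is therefore to show that (i) and (iii) fit inside this budget.

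The hard step is (i). Naively, each distance appears to need $\Omega(N)$ work, since \Cref{eqn:distance_between_partitions} sums over all pairs of blocks from two partitions of $[N]$. The key observation is that the $O(\widetilde{K}^2)$ candidates $(\pi^k, \nu^{k'})$ differ from a common baseline only through the placement of datum $n$, so all three sums in \Cref{eqn:distance_between_partitions} admit $O(1)$ updates once a small cache is maintained. Specifically, I would keep the block-size sums $S_\pi \defined \sum_{A \in \pi_{-n}} |A|^2$ and $S_\nu$, the intersection matrix $M_{jj'} \defined |A_j \cap B_{j'}|$ over blocks $A_j$ of $\pi_{-n}$ and $B_{j'}$ of $\nu_{-n}$, and its squared Frobenius norm $T \defined \sum_{j, j'} M_{jj'}^2$. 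Inserting $n$ into block $k$ of $\pi_{-n}$ changes $S_\pi$ to $S_\pi + 2|A_k| + 1$, while inserting into blocks $k$ and $k'$ of the two partitions changes $T$ to $T + 2 M_{kk'} + 1$; the three new-singleton subcases (singleton in $\pi^k$ only, $\nu^{k'}$ only, or both) produce similarly explicit constant-time adjustments that I would enumerate. Given the cache, each $\text{d}(\pi^k, \nu^{k'})$ is then a formula in $O(1)$ operations, and the whole cost matrix takes $O(\widetilde{K}^2)$ time to assemble.

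It remains to verify that the cache can be kept current with $O(1)$ work per change to the partitions. Removing $n$ from $\pi$ perturbs exactly one entry of $M$ and one term of $S_\pi$, moving $T$ by an explicit $O(1)$ amount, and symmetrically for $\nu$; reinserting $n$ at the sample from $\gamma^*$ is the reverse of the same update. A one-time $O(N + \widetilde{K}^2)$ initialization at the start of the sweep sets up $M$, $S_\pi$, $S_\nu$, and $T$ from scratch and is absorbed into the final bound. Step (iii) costs $O(\widetilde{K}^2)$ since $\gamma^*$ is supported on at most $\widetilde{K}^2$ pairs. Assembling the pieces, each of the $N$ conditional updates costs $O(\widetilde{K}) + O(\widetilde{K}^2) + O(\widetilde{K}^3 \log \widetilde{K}) + O(\widetilde{K}^2) = O(\widetilde{K}^3 \log \widetilde{K})$, which gives the claim. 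The main obstacle I anticipate is packaging the three new-singleton subcases cleanly enough that the $O(1)$-update claim for $T$ is visibly airtight across all combinations of existing-block and new-block insertions for $\pi^k$ and $\nu^{k'}$.
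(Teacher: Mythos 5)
Your proof follows essentially the same route as the paper's: decompose the per-datum cost into cost-matrix assembly, OT solve, and sampling; observe that the $O(\maxK^2)$ candidate pairs differ from a common baseline only in the placement of datum $n$; and maintain cached block sizes and intersection counts with $O(1)$ updates so that each entry of the cost matrix becomes a constant-time formula and the matrix is assembled in $O(\maxK^2)$ time. The paper packages the same idea slightly differently---it writes $d(\pi,\nu)=d(\pi_{-n},\nu_{-n})+2\left[|A_n|+|B_n|-2|A_n\cap B_n|\right]$ and invokes the invariance of the OT minimizer under adding a constant to all costs to discard the baseline term entirely, rather than reconstructing the full distances from the cached sums $S_\pi$, $S_\nu$, $T$ as you do---but the two bookkeeping schemes are interchangeable. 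The one point you should fix is step (ii): you attribute the $O(\maxK^3\log\maxK)$ bound for the OT linear program to the network simplex method, whereas the paper's Remark 1 (citing Kelly and O'Neill) gives network simplex a worst-case bound of $O(\maxK^5)$ and obtains the stated complexity from Orlin's algorithm instead; as written the justification for your dominant term is unsupported, though the theorem goes through unchanged once the correct solver is cited.
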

As a proof of \Cref{thm:Gibbs_complexity}, we detail an $O(N\maxK^3\log \maxK)$ implementation in \Cref{apd:gibbs_sweep_proof}.

\Cref{thm:Gibbs_complexity} guarantees that the run time of a coupled-sweep is no more than a $O(\maxK^2 \log \maxK)$ factor slower than a single-sweep.
The relative magnitude of $\maxK$ versus $N$ depends on the target distribution.
For the graph coloring distribution, $\maxK$ is upper bounded by the numbers of available colors. 
Under the Dirichlet process mixture model (DPMM) prior, with high probability, the size of partition of $N$ data points is within multiplicative factors of $\ln N$ \citep[Section 5.2]{arratia2003logarithmic}.
We \emph{conjecture} that under most initializations of the Gibbs sampler (such as from the DPMM prior), $\maxK = O(\ln N)$ with high probability. 

\begin{remark}\label{rem:complexity}
The worst-case run time of \Cref{thm:Gibbs_complexity} is attained with Orlin's algorithm \citep{orlin1993faster} to solve \Cref{eq:minimal-Ed-coupling} in $O(\maxK^3\log \maxK)$ time.
However, our implementation uses the simpler network simplex algorithm \citep{kelly1991minimum} as implemented by \citet{flamary2021pot}.
Although \citet[Section 3.6]{kelly1991minimum} upper bound the worst-case complexity of the network simplex as $O(\maxK^5)$, the algorithm's average-case performance may be as good as $O(\maxK^2)$ \citep[Figure 6]{bonneel2011displacement}.
\end{remark}

Although Orlin's algorithm \citep{orlin1993faster} has a better worst-case runtime, convenient public implementations are not available. 
In addition, our main contribution is the formulation of the coupling as an OT problem --- in principle, the dependence on $\maxK$ of the runtime in \Cref{thm:Gibbs_complexity} inherits from the best OT solver used.

\section{Empirical Results}\label{sec:applications}
In \Cref{sec:meeting-times}, we compare the distribution of meeting times between our partition-based coupling and two label-based couplings: under our coupling, chains meet earlier.
In \Cref{sec:unbiased-estimation}, we report unbiased estimates of two estimands of common interest: posterior predictive densities and the posterior mean proportion of data assigned to the largest clusters.
But first, we describe the applications and the target distributions under consideration in \Cref{sec:targets}. 

\subsection{Applications} \label{sec:targets}
\paragraph{Dirichlet process mixture models.} 
Clustering is a core task for understanding structure in data and density estimation.
When the number of latent clusters is a priori unknown, DPMMs \citep{antoniak1974mixtures} are a useful tool.
The cluster assignments of data points in a DPMM can be described with a Chinese restaurant process, or $\CRP{\alpha}{N}$, which is a probability distribution over $\calP{N}$ with mass $\Pr( \Pi = \pi ) = \frac{\alpha^{K} \prod_{A \in \pi} (|A| -1)! }{\alpha (\alpha + 1) \ldots (\alpha + N-1)}$ where $K$ is the number of clusters in $\pi$, and $\prod_{A \in \pi}$ iterates through the clusters.
We consider a fully conjugate DPMM \citep{maceachern1994estimating},
\begin{equation} \label{eq:dpmm-model}
\Pi \sim \CRP{\alpha}{N}, \hspace{10pt} \mu_A \distiid \distNorm{\mu_0}{\Sigma_0} \text{ for } A \in \Pi, \hspace{10pt}
\data_{j} \given  \mu_A \distiid \distNorm{\mu_A}{\Sigma_1} \text{ for } j \in A.
\end{equation}
The hyper-parameters of \Cref{eq:dpmm-model} are concentration $\alpha$, cluster prior mean $\mu_0$, observational covariance $\Sigma_1$ and cluster covariance $\Sigma_0$.
For this application, the distribution is the Bayesian posterior, $p_\Pi(\pi) := \Pr(\Pi=\pi\given \data)$.
The Gibbs conditionals of the posterior $\GibbsCond{N}{n}$ can be computed in closed form, using simple formulas for conditioning of jointly Gaussian random variables and the well-known Polya urn scheme \citep[Equation 3.7]{neal2000markov}. 

\paragraph{Graph coloring.} Uniform sampling of graph colorings is a problem of fundamental interest in theoretical computer science for its role as a subroutine within fully polynomial randomized approximation algorithms, 
where samples from the uniform distribution on graph colorings are used to estimate the number of unique colorings \citep{jerrum1998mathematical}.

Notably, this sampling problem reduces to sampling from the induced distribution on partitions,
by choosing an ordering of the sets in the partition and associating it with a random permutation of the set of colors.
Accordingly, estimates are just as easily constructed for a Markov chain defined on partitions.
See \Cref{apd:experimental_setup} for additional details.

\subsection{Reduced meeting times with OT couplings} \label{sec:meeting-times}
\begin{figure}[t!]
    \floatconts
    {fig:toy_results}
    {\caption{
    Reduced meeting times are achieved by OT couplings of Gibbs conditionals relative to maximal and common random number couplings
    in applications to (A) DPMM and (B) graph coloring.
    (A) Left and (B) left show two representative traces of the distance between coupled chains by iteration. (A) Right and (B) right show histograms of meeting times 250 replicate coupled chains.
    }}
    {
     \includegraphics[width=0.95\linewidth]{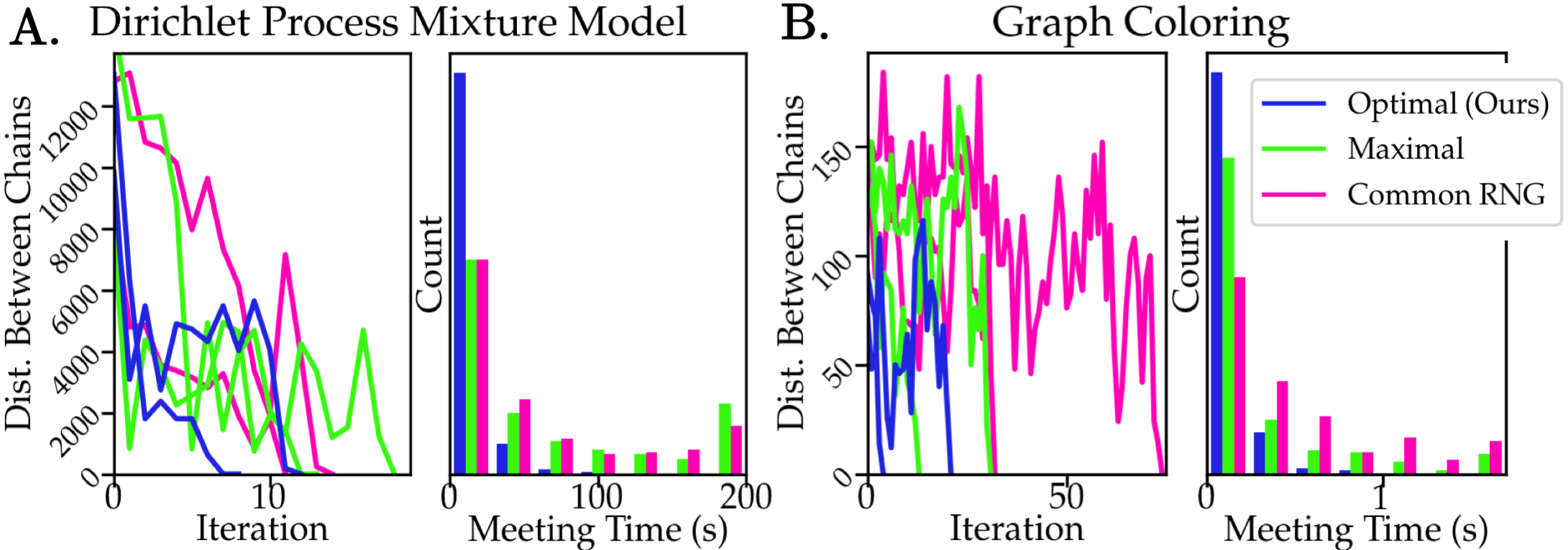}
    }
\end{figure}

\Cref{fig:toy_results} demonstrates that our approach yields faster couplings than the classical maximal coupling approach \citep[Section 5]{jerrum1998mathematical}, or an analogous coupling using shared common random numbers (see e.g.\ \citet{gibbs2004convergence}).
In applications to both Bayesian clustering and graph coloring, the distance between coupled chains stochastically decreases to $0$ (\Cref{fig:toy_results} left panels), with our approach leading to meetings after fewer sweeps.
Despite the larger per-sweep computational cost, our OT coupled chains typically meet after a shorter wall-clock time as well.
We suspect this improvement comes from avoiding label-switching, which hinders mixing of the maximal and common-RNG coupled chains.

The tightest bounds for mixing time for Gibbs samplers on graph colorings to date \citep{chen2019improved} rely on couplings on labeled representations.
Our results suggest better bounds may be attainable by considering convergence of partitions rather than labelings. 
Reducing the mixing time for Gibbs samplers of DPMM has been a motivation behind collapsed samplers \citep{maceachern1994estimating}, but the literature lacks upper bounds on the mixing time. 

\subsection{Unbiased estimation with parallel computation} \label{sec:unbiased-estimation}
We adapt the setup from \citet[Section 3.3]{jacob2020unbiased}.
Fixing a time budget, we run a single chain until time runs out and report the ergodic average. 
For coupled chains, we attempt as many meetings as possible in this time, and report the average across attempts. 

\paragraph{Posterior mean predictive density.}
The posterior predictive is a key quantity used in model selection \citep{gorur2010dirichlet}, 
and is of particular interest for DPMMs as it is known to be consistent for the underlying data distribution in total variation distance \citep{ghosal1999posterior}.
As a proof of concept, we computed unbiased estimates of the posterior predictive distribution of a DPMM (\Cref{fig:unbiased_estimation} A).

We generated $N = 100$ data points from a $10$-component Gaussian mixture model in one dimension, with the variance around cluster means equal to $4$.
We used a DPMM with $\alpha = 1$, $\mu_0 = 0$, $\Sigma_1 = 4.0$, $\Sigma_0 = 9.0$ to analyze the $N$ observations.
The solid blue curve is an unbiased estimate of the posterior predictive density. The black dashed curve is the true density of the population. The grey histogram bins the observed data. 
Because of the finite sample size, the predictive density is not equal to the true density.
In \Cref{apd:more-predictive}, the difference between the model's predictive density and the true density decreases as sample size $N$ increases.

\paragraph{Posterior mean component proportions.} 
A second key quantity of interest in DPMMs is the posterior mean of the proportion of data-points in the largest cluster(s) (e.g.\ as reported by \citet{liverani2015premium}).
We lastly explored parallel computation for unbiased estimation of this quantity on a real dataset (\Cref{fig:unbiased_estimation} B).
Specifically, we use a subset of the data used by \citet{prabhakaran2016dirichlet}, who used a DPMM to analyse single-cell RNA-sequencing data obtained from \citet{zeisel2015cell}
(see \Cref{apd:experimental_setup} for details).

\begin{figure}[t!]
	\floatconts
	{fig:unbiased_estimation}
	{\caption{
	Unbiased estimates for Dirichlet process mixture model are obtained using OT coupled chains. 
	(A) Unbiased estimate of the posterior predictive density for a toy problem.
        (B) Parallelism/accuracy trade-off for single and coupled chain estimators of the posterior mean portion of cells in the largest cluster.
        Each process is allocated 250 seconds, error bars indicate $\pm 2 \text{SEM}$.
        Ground truth denotes estimates from very long MCMC chains.
	}}
	{
		\includegraphics[width=0.95\linewidth]{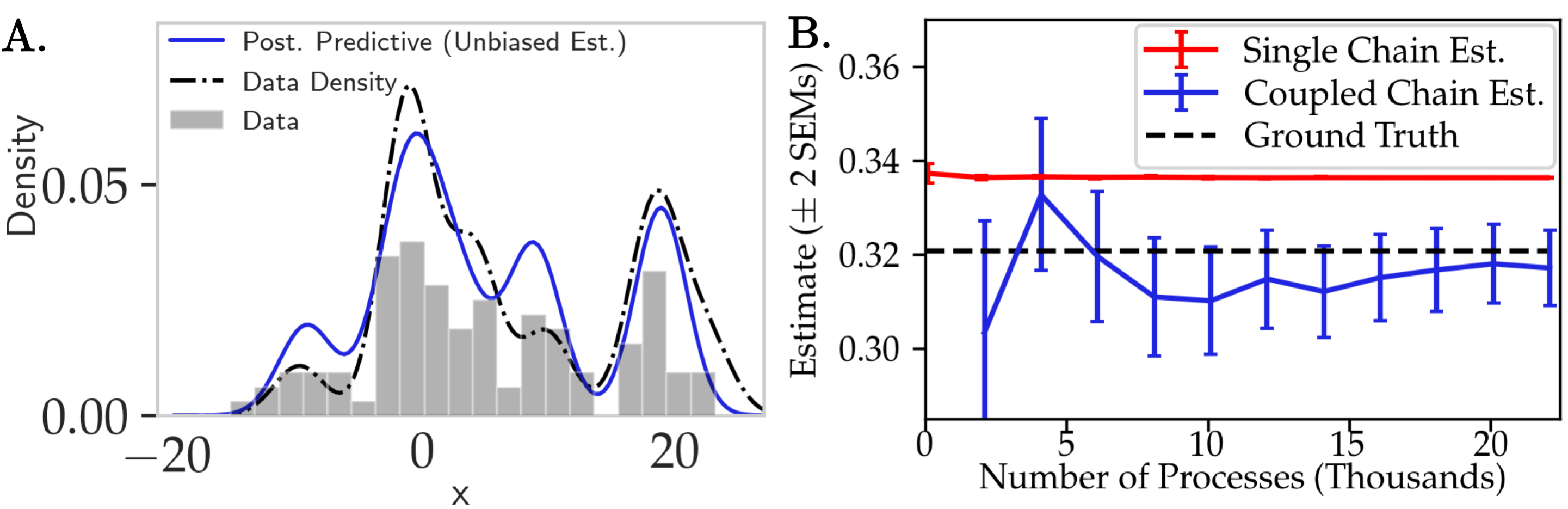}
	}
\end{figure}

\Cref{fig:unbiased_estimation} B presents a series of estimates of the proportion of cells in the largest component, and approximate frequentist confidence intervals. For each number of processes $M$, we aggregated $M$ independent single and coupled chain estimates,
each from a single processor with a {250} second limit.
We compare to the `ground-truth' proportion obtained by MCMC run for {10,000} sweeps.
Our results demonstrate the advantage of unbiased estimates in the high-parallelism, time-limited regime;
while single-chain estimates have lower variance, coupled chains yield smaller error when aggregated across many processes.
In addition, as result of unbiasedness, standard frequentist intervals may be expected to have good coverage.
By contrast, we cannot expect such intervals from single chains to be calibrated;
indeed, the true value is many standard errors from the single chain estimates (\Cref{fig:unbiased_estimation} B).

However, due to the variance of the unbiased estimates we require a degree of parallelism that may be impractical for most practitioners ($\approx$ {5,000} pairs of chains to attain error comparable to that of as many single chains). 
Indeed, in our experiments, we simulated this high parallelism by sequentially running batches of {100} processes in parallel.
Additionally, the estimation strategy can be finicky:
unbiasedness requires coupled chains to meet exactly, and for some models and experiments not shown,
we found that some pairs of coupled chains failed to meet quickly. 
This difficulty is expected for problems where single chains mix slowly, as slow mixing precludes the existence of fast couplings \citep[Chapter 3]{jacob2020couplings}.
Looking forward, we expect that our work will naturally benefit from advances in parallel-computation software and hardware, such as GPU implementations.  
Reducing the variance of the unbiased estimates is an open question, and is the target of ongoing work. 

\section*{Acknowledgements}
We thank the anonymous reviewers of AABI 2020 for pointing out a reference on other metrics between partitions \citep{meilua2007comparing}. 
This work was supported in part by an NSF CAREER Award and an ONR Early Career Grant.
BLT is supported by NSF GRFP.

\bibliography{references}

\begin{thebibliography}{30}
\providecommand{\natexlab}[1]{#1}
\providecommand{\url}[1]{\texttt{#1}}
\expandafter\ifx\csname urlstyle\endcsname\relax
  \providecommand{\doi}[1]{doi: #1}\else
  \providecommand{\doi}{doi: \begingroup \urlstyle{rm}\Url}\fi

\bibitem[Antoniak(1974)]{antoniak1974mixtures}
C.~E. Antoniak.
\newblock {Mixtures of Dirichlet processes with applications to Bayesian
  nonparametric problems}.
\newblock \emph{Annals of Statistics}, 2\penalty0 (6):\penalty0 1152--1174,
  1974.

\bibitem[Arratia et~al.(2003)Arratia, Barbour, and
  Tavar{\'e}]{arratia2003logarithmic}
R.~Arratia, A.~D. Barbour, and S.~Tavar{\'e}.
\newblock \emph{Logarithmic Combinatorial Structures: a Probabilistic
  Approach}, volume~1.
\newblock European Mathematical Society, 2003.

\bibitem[Bonneel et~al.(2011)Bonneel, Van De~Panne, Paris, and
  Heidrich]{bonneel2011displacement}
N.~Bonneel, M.~Van De~Panne, S.~Paris, and W.~Heidrich.
\newblock Displacement interpolation using {L}agrangian mass transport.
\newblock In \emph{Proceedings of the 2011 SIGGRAPH Asia Conference}, 2011.

\bibitem[Chen et~al.(2019)Chen, Delcourt, Moitra, Perarnau, and
  Postle]{chen2019improved}
S.~Chen, M.~Delcourt, A.~Moitra, G.~Perarnau, and L.~Postle.
\newblock Improved bounds for randomly sampling colorings via linear
  programming.
\newblock In \emph{Proceedings of the Thirtieth Annual ACM-SIAM Symposium on
  Discrete Algorithms}. SIAM, 2019.

\bibitem[DeFord et~al.(2021)DeFord, Duchin, and
  Solomon]{DeFord2021Recombination}
D.~DeFord, M.~Duchin, and J.~Solomon.
\newblock Recombination: a family of {M}arkov chains for redistricting.
\newblock \emph{Harvard Data Science Review}, 3 2021.
\newblock https://hdsr.mitpress.mit.edu/pub/1ds8ptxu.

\bibitem[Flamary et~al.(2021)Flamary, Courty, Gramfort, Alaya, Boisbunon,
  Chambon, Chapel, Corenflos, Fatras, Fournier, Gautheron, Gayraud, Janati,
  Rakotomamonjy, Redko, Rolet, Schutz, Seguy, Sutherland, Tavenard, Tong, and
  Vayer]{flamary2021pot}
R.~Flamary, N.~Courty, A.~Gramfort, M.~Z. Alaya, A.~Boisbunon, S.~Chambon,
  L.~Chapel, A.~Corenflos, K.~Fatras, N.~Fournier, L.~Gautheron, N.~T. Gayraud,
  H.~Janati, A.~Rakotomamonjy, I.~Redko, A.~Rolet, A.~Schutz, V.~Seguy, D.~J.
  Sutherland, R.~Tavenard, A.~Tong, and T.~Vayer.
\newblock {POT: Python Optimal Transport}.
\newblock \emph{Journal of Machine Learning Research}, 22\penalty0
  (78):\penalty0 1--8, 2021.

\bibitem[Geng et~al.(2019)Geng, Bhattacharya, and Pati]{geng2019probabilistic}
J.~Geng, A.~Bhattacharya, and D.~Pati.
\newblock Probabilistic community detection with unknown number of communities.
\newblock \emph{Journal of the American Statistical Association}, 114\penalty0
  (526):\penalty0 893--905, 2019.

\bibitem[Ghosal et~al.(1999)Ghosal, Ghosh, and
  Ramamoorthi]{ghosal1999posterior}
S.~Ghosal, J.~K. Ghosh, and R.~V. Ramamoorthi.
\newblock Posterior consistency of {D}irichlet mixtures in density estimation.
\newblock \emph{Annals of Statistics}, 27\penalty0 (1):\penalty0 143--158,
  1999.

\bibitem[Ghosh and Ramamoorthi(2003)]{ghosh2003bayesian}
J.~K. Ghosh and R.~V. Ramamoorthi.
\newblock \emph{Bayesian {N}onparametrics}.
\newblock Springer Series in Statistics, 2003.

\bibitem[Gibbs(2004)]{gibbs2004convergence}
A.~L. Gibbs.
\newblock {Convergence in the Wasserstein metric for Markov chain Monte Carlo
  algorithms with applications to image restoration}.
\newblock \emph{Stochastic Models}, 20\penalty0 (4):\penalty0 473--492, 2004.

\bibitem[Glynn and Rhee(2014)]{glynn2014exact}
P.~W. Glynn and C.-h. Rhee.
\newblock {Exact estimation for Markov chain equilibrium expectations}.
\newblock \emph{Journal of Applied Probability}, 51\penalty0 (A):\penalty0
  377--389, 2014.

\bibitem[G{\"o}r{\"u}r and Rasmussen(2010)]{gorur2010dirichlet}
D.~G{\"o}r{\"u}r and C.~E. Rasmussen.
\newblock {D}irichlet process {G}aussian mixture models: choice of the base
  distribution.
\newblock \emph{Journal of Computer Science and Technology}, 25\penalty0
  (4):\penalty0 653--664, 2010.

\bibitem[Holland et~al.(1983)Holland, Laskey, and
  Leinhardt]{holland1983stochastic}
P.~W. Holland, K.~B. Laskey, and S.~Leinhardt.
\newblock Stochastic blockmodels: First steps.
\newblock \emph{Social {N}etworks}, 5\penalty0 (2):\penalty0 109--137, 1983.

\bibitem[Jacob(2020)]{jacob2020couplings}
P.~E. Jacob.
\newblock {Couplings and Monte Carlo}.
\newblock Course Lecture Notes, 2020.

\bibitem[Jacob et~al.(2020)Jacob, O’Leary, and
  Atchad{\'e}]{jacob2020unbiased}
P.~E. Jacob, J.~O’Leary, and Y.~F. Atchad{\'e}.
\newblock {Unbiased Markov chain Monte Carlo methods with couplings}.
\newblock \emph{Journal of the Royal Statistical Society Series B}, 82\penalty0
  (3):\penalty0 543--600, 2020.

\bibitem[Jasra et~al.(2005)Jasra, Holmes, and Stephens]{jasra2005markov}
A.~Jasra, C.~C. Holmes, and D.~A. Stephens.
\newblock Markov chain {M}onte {C}arlo methods and the label switching problem
  in {B}ayesian mixture modeling.
\newblock \emph{Statistical Science}, pages 50--67, 2005.

\bibitem[Jerrum(1998)]{jerrum1998mathematical}
M.~Jerrum.
\newblock {Mathematical foundations of the Markov chain Monte Carlo method}.
\newblock In \emph{Probabilistic {M}ethods for {A}lgorithmic {D}iscrete
  {M}athematics}, pages 116--165. Springer, 1998.

\bibitem[Kelly and O'Neill(1991)]{kelly1991minimum}
D.~J. Kelly and G.~M. O'Neill.
\newblock \emph{The minimum cost flow problem and the network simplex solution
  method}.
\newblock PhD thesis, Citeseer, 1991.

\bibitem[Lijoi et~al.(2005)Lijoi, Pr{\"u}nster, and
  Walker]{lijoi2005consistency}
A.~Lijoi, I.~Pr{\"u}nster, and S.~G. Walker.
\newblock On consistency of nonparametric normal mixtures for {B}ayesian
  density estimation.
\newblock \emph{Journal of the American Statistical Association}, 100\penalty0
  (472):\penalty0 1292--1296, 2005.

\bibitem[Liverani et~al.(2015)Liverani, Hastie, Azizi, Papathomas, and
  Richardson]{liverani2015premium}
S.~Liverani, D.~I. Hastie, L.~Azizi, M.~Papathomas, and S.~Richardson.
\newblock {PReMiuM: An R package for profile regression mixture models using
  Dirichlet processes}.
\newblock \emph{Journal of Statistical Software}, 64\penalty0 (7):\penalty0 1,
  2015.

\bibitem[MacEachern(1994)]{maceachern1994estimating}
S.~N. MacEachern.
\newblock Estimating normal means with a conjugate style {D}irichlet process
  prior.
\newblock \emph{Communications in Statistics - Simulation and Computation},
  23\penalty0 (3):\penalty0 727--741, 1994.

\bibitem[Meil{\u{a}}(2007)]{meilua2007comparing}
M.~Meil{\u{a}}.
\newblock Comparing clusterings—an information based distance.
\newblock \emph{Journal of Multivariate Analysis}, 98\penalty0 (5):\penalty0
  873--895, 2007.

\bibitem[Mirkin and Chernyi(1970)]{mirkin1970measurement}
B.~G. Mirkin and L.~B. Chernyi.
\newblock Measurement of the distance between distinct partitions of a finite
  set of objects.
\newblock \emph{Automation and Remote Control}, 5:\penalty0 120--127, 1970.

\bibitem[Neal(2000)]{neal2000markov}
R.~M. Neal.
\newblock Markov chain sampling methods for {D}irichlet process mixture models.
\newblock \emph{Journal of Computational and Graphical Statistics}, 9\penalty0
  (2):\penalty0 249--265, 2000.

\bibitem[Orlin(1993)]{orlin1993faster}
J.~B. Orlin.
\newblock A faster strongly polynomial minimum cost flow algorithm.
\newblock \emph{Operations Research}, 41\penalty0 (2):\penalty0 338--350, 1993.

\bibitem[Pitman(2006)]{pitman2006combinatorial}
J.~Pitman.
\newblock \emph{{Combinatorial Stochastic Processes: Ecole d'Et{\'e} de
  Probabilit{\'e}s de Saint-Flour XXXII-2002}}.
\newblock Springer, 2006.

\bibitem[Prabhakaran et~al.(2016)Prabhakaran, Azizi, Carr, and
  Pe’er]{prabhakaran2016dirichlet}
S.~Prabhakaran, E.~Azizi, A.~Carr, and D.~Pe’er.
\newblock {Dirichlet process mixture model for correcting technical variation
  in single-cell gene expression data}.
\newblock In \emph{International Conference on Machine Learning}, 2016.

\bibitem[Rand(1971)]{rand1971objective}
W.~M. Rand.
\newblock Objective criteria for the evaluation of clustering methods.
\newblock \emph{Journal of the American Statistical Association}, 66\penalty0
  (336):\penalty0 846--850, 1971.

\bibitem[Tosh and Dasgupta(2014)]{tosh2014lower}
C.~Tosh and S.~Dasgupta.
\newblock {Lower bounds for the Gibbs sampler over mixtures of Gaussians}.
\newblock In \emph{International Conference on Machine Learning}, 2014.

\bibitem[Zeisel et~al.(2015)Zeisel, Mu{\~n}oz-Manchado, Codeluppi,
  L{\"o}nnerberg, La~Manno, Jur{\'e}us, Marques, Munguba, He, and
  Betsholtz]{zeisel2015cell}
A.~Zeisel, A.~B. Mu{\~n}oz-Manchado, S.~Codeluppi, P.~L{\"o}nnerberg,
  G.~La~Manno, A.~Jur{\'e}us, S.~Marques, H.~Munguba, L.~He, and C.~Betsholtz.
\newblock Cell types in the mouse cortex and hippocampus revealed by
  single-cell {RNA}-seq.
\newblock \emph{Science}, 347\penalty0 (6226):\penalty0 1138--1142, 2015.

\end{thebibliography}

\appendix

\section{Proof of Gibbs Sweep Time Complexity}\label{apd:gibbs_sweep_proof}
We here detail our $O(N\maxK^3\log \maxK)$ implementation of \Cref{alg:gibbs}.
This serves as proof of \Cref{thm:Gibbs_complexity}.

Note that work in \Cref{alg:gibbs} may be separated into 2 computationally demanding stages for each of the $N$ data-points, $n \in [N]$;
computing the distances between each pair of partitions in the Cartesian product of supports of the Gibbs conditionals $\GibbsCond{N}{n}(\cdot | \pi_{-n})$ and $\GibbsCond{N}{n}(\cdot | \nu_{-n})$ and
solving the optimal transport problem in line 7.
As discussed in Remark \ref{rem:complexity}, the optimal transport problem may be solved in $O(K^3 \log K)$ time, and is the bottleneck step.
As such it remains only to show that for each $n\in [N]$, the pairwise distances may also be computed in $O(K^3\log K)$ time.

Recall that for two partitions $\pi, \nu \in \calP{N}$ the metric of interest is
\[
    \text{d}(\pi, \nu) 
    = \sum_{A\in \pi}|A|^2 
      +  \sum_{B \in \nu} |B|^2
      -2\sum_{(A, B) \in \pi\times\nu} |A\cap B|^2.
\]

However, it is not obvious from this expression alone that fast computation of pairwise distances should be possible.
We make this explicit in the following remark.
\begin{remark}
    Given constant $O(1)$ time for querying set membership (e.g.\ as provided by a standard hash-table set implementation),
for $\pi, \nu \in \mathcal{P}_N$, $d(\pi, \nu)$ in \Cref{eqn:distance_between_partitions} may be computed in $O\left(N \min(|\pi|, |\nu|) \right)$ time.
If we let $\maxK$ be the number of groups, so that $\maxK\approx \min(|\pi|, |\nu|)$, this gives
$O(N\maxK)$ time.
\end{remark}

While this is certainly faster than a naive approach relying on the formulation of this metric based on adjacency matrices, it is still not sufficient, as it is a factor of $N\maxK^2$ slower than the original (recall that we will need to do this for $\maxK^2$ pairs of clusters assignments).  

However we can do better for the Gibbs update by making two observations.
First, if we use $A_n$ and $B_n$ to denote the elements of $\pi$ and $\nu,$ respectively, containing data-point $n$, then for any $n$ we may write
\[\label{eqn:additional_distance}
    d(\pi,\nu) &= d(\pi_{-n}, \nu_{-n}) + 
    \left[|A_n|^2 - (|A_n|-n)^2 \right] + \left[|B_n|^2 - (|B_n|-n)^2 \right]  \\
    -2&\left[ |A_n \cap B_n|^2 - ( |A_n \cap B_n|-1)^2\right] \\
    &=d(\pi_{-n}, \nu_{-n}) + 2 \left[
        |A_n| + |B_n| - 2|A_n\cap B_n| \right].
\]

Second, the solution to the optimisation problem in \Cref{eq:minimal-Ed-coupling} is unchanged when we add a constant value to every distance:
Using again the notation of \Cref{alg:gibbs} we let $q:=\GibbsCond{N}{n}(\cdot | \pi_{-n})$ and $r:=\GibbsCond{N}{n}(\cdot | \nu_{-n})$ with supports 
$(\pi_1, \pi_2, \dots, \pi_K) = \text{support}(q)$ and 
$(\nu_1, \nu_2, \dots, \nu_{K^\prime}) = \text{support}(r)$.
and rewrite
\[\label{eqn:reframed_OT_problem}
    \gamma^* \defined \arg \min_{\gamma \in \Gamma(q, r)} \sum_{ \in \calP{N}} \sum_{y \in \calP{N}} d(x,y)\gamma(x,y) \\
=\arg \min_{\gamma \in \Gamma(q, r)} \sum_{x \in \calP{N}} \sum_{y \in \calP{N}} (d(x,y)-c)\gamma(x,y) 
\]
for any constant $c$; taking $c=d(\pi_{-n}, \nu_{-n})$ reveals that we need only compute the second term in \Cref{eqn:additional_distance}.

At first it may seem that this still does not solve the problem, as directly computing the size of the set intersections is $O(N)$ (if cluster sizes scale as $O(N)$).
However, \Cref{eqn:reframed_OT_problem} is just our final stepping stone.
If we additionally keep track of sizes of intersections at every step, updating them as we adapt the partitions will take constant time for each update.
As such, we are able to form the matrix of pairwise distances in $O(\maxK^2)$ time.
Regardless of $N$, this moves the bottleneck step to solving the OT problem which, as discussed in \Cref{rem:complexity}, may be computed in $O(\maxK^3 \log \maxK)$ time with Orlin's algorithm \citep{orlin1993faster}.
We provide a practical implementation of this approach in our code; see \texttt{pairwise\_dists()} in \texttt{modules/utils.py}.

\section{Additional Experimental Details}\label{apd:experimental_setup}
\subsection{Meeting time distributions}

\paragraph{DP mixtures.}
For each replicate, we simulated $N=150$ data-points from a $K=4$ component, 2 dimensional Gaussian mixture model.
The target distribution was the posterior of the probabilistic model \Cref{eq:dpmm-model}, with $\Sigma_0=2.5 I_2$, $\Sigma_1=2 I_2$ and $\alpha=0.2$.
For each replicate true means for the finite mixture were sampled as $\mu_k \sim \mathcal{N}(0, \Sigma_0)$, mixing proportions as $\theta \sim \text{Dir}(\alpha 1_K)$,
and each of the $n\in[N]$ observations as $z_n\sim \text{Cat}(\theta)$, $\data_n\sim \mathcal{N}(\mu_{z_n}, \Sigma_1)$.
See \texttt{notebooks/Coupled\_CRP\_sampler.ipynb} for complete implementation and details. 
This code is adapted from \href{https://github.com/tbroderick/mlss2015_bnp_tutorial/blob/master/ex5_dpmm.R}{github.com/tbroderick/mlss2015\_bnp\_tutorial/blob/master/ex5\_dpmm.R}

\paragraph{Graph coloring}
Let $G$ be an undirected graph with vertices $V=[N]$ and edges $E \subset V \otimes V,$ and let $Q=[q]$ be set of $q$ colors.
A graph coloring is an assignment of a color in $Q$ to each vertex satisfying that the endpoints of each edge have different colors.
We here demonstrate an application of our method to a Gibbs sampler which explores the uniform distribution over valid $q-$colorings of $G$,
i.e.\ the distribution which places equal mass on ever proper coloring of $G$.

To employ \Cref{alg:gibbs}, for this problem we need only to characterise the PMF on partitions of the vertices implied by the uniform distribution on its colorings.

A partition corresponds to a proper coloring only if no two adjacent vertices are in the element of the partition.
As such, we can write
$$
p_{\Pi_N}(\pi) \propto \ind{|\pi| \le q \text{  and  }  A(\pi)_{i,j}=1 \rightarrow (i, j) \not \in E, \ \forall i \ne j } { q \choose |\pi|} |\pi|!,
$$
where the indicator term checks that $\pi$ can correspond to a proper coloring
and the second term accounts for the number of unique colorings which induce the partition $\pi$.
In particular it is the product of the number of ways to choose $|\pi|$ unique colors from $Q$ ( ${q \choose |\pi|} :=\frac{q!}{|\pi|! (q-|\pi|)!}$) and the number of ways to assign those colors to the groups of vertices in $\pi$.

For the experiments in \Cref{fig:toy_results}, we simulated Erd\H{o}s-R\'{e}nyi random graphs with $N=25$ vertices, and including each possible edge with probability $0.2$.
We chose a maximum number of colors $Q$ by first initializing a coloring greedily and setting $Q$ as the number of colors used in this initial coloring plus two.
See \texttt{notebooks/coloring\_OT.ipynb} for complete implementation and results.
This code is adapted from: \\
\href{https://github.com/pierrejacob/couplingsmontecarlo/blob/master/inst/chapter3/3graphcolourings.R}{github.com/pierrejacob/couplingsmontecarlo/inst/chapter3/3graphcolourings.R}

\subsection{Unbiased estimation}

\paragraph{Predictive density in Gaussian mixture data.}
The true density is a $10$-component Gaussian mixture model with known observational noise variance $\sigma = 2.0$. 
The cluster proportions were generated from a symmetric Dirichlet distribution with mass $1$ for all $10$-coordinates. 
The cluster means were randomly generated from $\distNorm{0}{10^2}$. 
The target DP mixture model had $\alpha = 1$, standard deviation over cluster means $3.0$ and standard deviation over observations $2.0$.
The function of interest is the posterior predictive density
\begin{equation} \label{eq:pred-def}
	\Pr(\data_{N+1} \in dx \given \data_{1:N}) = \sum_{\Pi_{N+1}} \Pr(\data_{N+1} \in dx \given \Pi_{N+1}, \data_{1:N}) \Pr(\Pi_{N+1} \given \data_{1:N}).
\end{equation}
In \cref{eq:pred-def}, $\Pi_{N+1}$ denotes the partition of the data $\data_{1:(N+1)}$. To translate \cref{eq:pred-def} into an integral over just the posterior over $\Pi_{N}$, the partition of $\data_{1:N}$, we break up $\Pi_{N+1}$ into $(\Pi_{N}, Z)$ where $Z$ is the cluster indicator specifying the cluster of $\Pi_{N}$ (or a new cluster) to which $\data_{N+1}$ belongs. Then
\begin{equation*}
	\Pr(\data_{N+1} \in dx \given \data_{1:N}) = \sum_{\Pi_N} \left[ \sum_{Z}  \Pr(\data_{N+1} \in dx, Z \given \Pi_{N}, \data_{1:N}) \right] \Pr(\Pi_N \given \data_{1:N})
\end{equation*}
Each $\Pr(\data_{N+1} \in dx, Z \given \Pi_{N}, \data_{1:N})$ is computed using the prediction rule for the CRP and Gaussian conditioning. Namely
\begin{equation*}
	\Pr(\data_{N+1} \in dx, Z \given \Pi_{N}, \data_{1:N}) = \underbrace{\Pr(\data_{N+1} \in dx \given Z, \Pi_{N}, \data_{1:N})}_{\text{Posterior predictive of Gaussian}} \times \underbrace{\Pr(Z \given \Pi_{N})}_{\text{CRP prediction rule}}.
\end{equation*}
The first term is computed with the function used during Gibbs sampling to reassign data points to clusters. In the second term, we ignore the conditioning on $\data_{1:N}$, since $Z$ and $\data_{1:N}$ are conditionally independent given $\Pi_{N}.$

We ran {10,000} replicates of the time-budgeted estimator using coupled chains, each replicate given a sufficient time budget so that all {10,000} replicates had at least one successful meeting in the allotted time. 

\paragraph{Top component proportion in single-cell RNAseq.}
We extracted $D=50$ genes with the most variation of $N=200$ cells.
We then take the log of the features, and normalize so that each feature has mean $0$ and variance $1$.
We as our target the posterior of the probabilistic model in \cref{eq:dpmm-model} with $\alpha = 1.0$, $\mu_0 = 0$, $\Sigma_0 = 0.5$, $\Sigma_1 = 1.3 I_D$.
Notably, this is a simplification of the set-up considered by \citet{prabhakaran2016dirichlet}, who work with a larger dataset and additionally perform fully Bayesian inference over these hyper-parameters.
In our experiments, the function of interest is the posterior expected of the proportion of cells in the largest cluster i.e.\ $\mathbb{E}[\max_{|A| \in \pi}|A|/N \given \data ].$

\section{More plots of predictive density} \label{apd:more-predictive}
\subsection{Posterior concentration implies convergence in total variation of predictive density}

Some references on posterior concentration are \citet{ghosal1999posterior,lijoi2005consistency}. 
The true data generating process is that there exists some density $f_0$ w.r.t.\ Lebesgue measure that generates the data in an iid manner $X_1,X_2,\ldots,X_n$. 
We use the notation $P_{f_0}$ to denote the probability measure with density $f_0$. 
The probabilistic model is that we have a prior $\Pi$ over densities $f$, and observations $X_i$ are conditionally iid given $f$. 
Let $\mathcal{F}$ be the set of all densities on $\mathbb{R}$.
For any measurable subset $A$ of $\mathcal{F}$, the posterior of $A$ given the observations $X_i$ is denoted $\Pi(A|X_{1:N})$. 
A strong neighborhood around $f_0$ is any subset of $\mathcal{F}$ containing a set of the form $V = \{f \in \mathcal{F}: \int | f - f_0| < \epsilon \}$ according to \citet{ghosal1999posterior}.
The prior $\Pi$ is strongly consistent at $f_0$ if for any strong neighborhood $U$,
\begin{equation} \label{eq:consistency}
	 \lim_{n \to \infty} \Pi(U|X_{1:n})  = 1,
\end{equation}
holds almost surely for $X_{1:\infty}$ distributed according to $P_{f_0}^{\infty}$. 

\begin{theorem} [{\citet[Proposition 4.2.1]{ghosh2003bayesian}}] \label{thm:consistent-predictive}
	If a prior $\Pi$ is strongly consistent at $f_0$ then the predictive distribution, defined as
	\begin{equation} \label{eq:pred-as-density}
		\widehat{P}_n(A \mid X_{1:n}) \coloneqq \int_f P_f(A)  \Pi(f \mid X_{1:n})
	\end{equation}
	 also converges to $f_0$ in total variation in a.s. $P_{f_0}^{\infty}$
	\begin{equation*}
		d_{TV}\left( \widehat{P}_n, P_{f_0}\right) \xrightarrow{} 0.
	\end{equation*}
\end{theorem}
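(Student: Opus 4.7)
The plan is to bound the total variation distance between $\widehat{P}_n$ and $P_{f_0}$ directly by the posterior expectation of the total variation distance from $f_0$, and then to exploit strong consistency (which by definition controls the posterior mass on strong neighborhoods of $f_0$) to conclude that this expectation vanishes.

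First, I would write the TV distance in its $L^1$ form, $d_{TV}(\widehat{P}_n, P_{f_0}) = \tfrac{1}{2}\int |\widehat{p}_n(x) - f_0(x)|\,dx$, where $\widehat{p}_n$ denotes the density of $\widehat{P}_n$ with respect to Lebesgue measure. Since the mixture representation \eqref{eq:pred-as-density} implies $\widehat{p}_n(x) = \int f(x)\,\Pi(df \mid X_{1:n})$, the triangle inequality under the integral yields
\[
|\widehat{p}_n(x) - f_0(x)| \;=\; \left| \int (f(x) - f_0(x))\, \Pi(df \mid X_{1:n}) \right| \;\le\; \int |f(x) - f_0(x)|\, \Pi(df \mid X_{1:n}).
\]
Applying Tonelli's theorem to swap the Lebesgue and posterior integrals (both integrands are nonnegative, so no integrability issue arises), I obtain
\[
d_{TV}(\widehat{P}_n, P_{f_0}) \;\le\; \int d_{TV}(f, f_0)\, \Pi(df \mid X_{1:n}).
\]

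Next, I would control the right-hand side by splitting the integral according to a strong neighborhood. Fix $\epsilon > 0$ and set $V_\epsilon := \{ f \in \mathcal{F} : d_{TV}(f,f_0) < \epsilon \}$, which is a strong neighborhood of $f_0$. Then
\[
\int d_{TV}(f, f_0)\, \Pi(df \mid X_{1:n}) \;\le\; \epsilon \cdot \Pi(V_\epsilon \mid X_{1:n}) \;+\; 1 \cdot \Pi(V_\epsilon^c \mid X_{1:n}) \;\le\; \epsilon + \Pi(V_\epsilon^c \mid X_{1:n}),
\]
using the fact that $d_{TV}$ is bounded by $1$. By the strong consistency hypothesis \eqref{eq:consistency}, $\Pi(V_\epsilon \mid X_{1:n}) \to 1$ almost surely under $P_{f_0}^{\infty}$, so $\Pi(V_\epsilon^c \mid X_{1:n}) \to 0$ almost surely. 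Taking $n \to \infty$ gives $\limsup_n d_{TV}(\widehat{P}_n, P_{f_0}) \le \epsilon$ a.s., and letting $\epsilon \downarrow 0$ along a countable sequence finishes the argument.

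The only genuinely subtle point is the last step: the a.s.-null set on which convergence fails depends on $\epsilon$, so I need to take a countable sequence $\epsilon_k \downarrow 0$ and take the union of the corresponding null sets before sending $k \to \infty$ to obtain a single null set outside of which $d_{TV}(\widehat{P}_n, P_{f_0}) \to 0$. Everything else, including the use of Tonelli and the reduction of TV to an $L^1$ norm of densities (which requires that the posterior predictive indeed admits a density, justified by mixing densities), is routine.
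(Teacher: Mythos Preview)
Your argument is correct. The paper itself does not supply a proof of this statement; it simply cites \citet[Proposition 4.2.1]{ghosh2003bayesian} and then invokes the result inside the proof of \Cref{thm:DPMM-consistent-finite-mixture}. So there is no ``paper's own proof'' to compare against beyond the reference.

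That said, what you wrote is exactly the standard route (and essentially the one in Ghosh--Ramamoorthi): bound $d_{TV}(\widehat P_n,P_{f_0})$ by the posterior mean of $d_{TV}(f,f_0)$ via Jensen/Tonelli, then split over a strong neighborhood and use that $d_{TV}\le 1$ on the complement. Your care with the countable sequence $\epsilon_k\downarrow 0$ to obtain a single almost-sure null set is the right way to finish. One tiny bookkeeping note: the paper defines strong neighborhoods via $\int|f-f_0|<\epsilon$ rather than $d_{TV}(f,f_0)<\epsilon$, but since these differ only by a factor of $2$, your $V_\epsilon$ is indeed a strong neighborhood and the consistency hypothesis applies as you use it.
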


The definition of posterior predictive density in \cref{eq:pred-as-density} can equivalently be rewritten as
\begin{equation*}
	\widehat{P}_n(A \mid X_{1:n}) = \Pr(X_{n+1} \in A \given X_{1:n}),
\end{equation*}
since $P_f(A) = P_f(X_{n+1} \in A)$ and all the $X$'s are conditionally iid given $f$.

\begin{theorem}[{DP mixtures prior is consistent for finite mixture models}] \label{thm:DPMM-consistent-finite-mixture}
	Let the true density be a finite mixture model $f_0(x) \defined \sum_{i=1}^{m} p_i \mathcal{N}(x|\theta_i,\sigma_1^2)$. Consider the following probabilistic model
	\begin{align*}
		\widehat{P} &\sim \distDP{\alpha}{\mathcal{N}(0,\sigma_0^2)} \\
		\theta_i \mid \widehat{P} &\stackrel{iid}{\sim} \widehat{P} & & i = 1,2,\ldots,n \\
		X_i \mid \theta_i &\stackrel{indep}{\sim} \mathcal{N}(\theta_i,\sigma_1^2) & & i = 1,2,\ldots,n
	\end{align*}
	Let $\widehat{P}_n$ be the posterior predictive distribution of this generative process. Then with a.s.\ $P_{f_0}$
	\begin{equation*}
		d_{TV}\left( \widehat{P}_n, P_{f_0}\right) \xrightarrow{n \to \infty} 0.
	\end{equation*}
\end{theorem}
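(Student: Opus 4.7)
The plan is to reduce the claim to verifying that the DP Gaussian location-mixture prior is strongly consistent at $f_0$ in the sense of \eqref{eq:consistency}, and then invoke \Cref{thm:consistent-predictive} to transfer consistency to the predictive distribution in total variation. Accordingly, the argument splits into two classical pieces: (i) a Kullback--Leibler (KL) support condition that yields weak consistency via Schwartz's theorem, and (ii) a sieve/entropy argument that upgrades this to strong consistency in total-variation neighborhoods.

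First I would check the KL support condition: for every $\epsilon>0$,
\[
\Pi\!\left(\left\{f : \textstyle\int f_0 \log(f_0/f)\, d\lambda < \epsilon\right\}\right) > 0.
\]
Since $f_0(x) = \sum_{i=1}^m p_i \mathcal{N}(x\mid \theta_i,\sigma_1^2)$ shares the kernel bandwidth $\sigma_1$ with the model, the mixing measure $P_0=\sum_i p_i \delta_{\theta_i}$ reproduces $f_0$ exactly. The DP base measure $\mathcal{N}(0,\sigma_0^2)$ has full support on $\mathbb{R}$, so weak neighborhoods of $P_0$ receive positive DP mass. A standard continuity argument (convolution with a fixed Gaussian kernel is continuous from the weak topology on mixing measures to KL divergence on densities, provided one controls tails, which Gaussian tails supply) then shows that weak neighborhoods of $P_0$ map into KL neighborhoods of $f_0$. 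This gives the required prior mass.

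Next, Schwartz's theorem combines the KL support condition with the existence of exponentially consistent tests for weak neighborhoods (automatic) to give weak consistency of the posterior at $f_0$. To upgrade to total-variation (``strong'') neighborhoods I would follow the Ghosal--Ghosh--Ramamoorthi (1999) construction: pick sieves $\mathcal{F}_n = \{f_{P,\sigma_1} : P([-a_n,a_n])\ge 1-\delta_n\}$ with $a_n,\delta_n$ tuned so that (a) the bracketing $L_1$-entropy of $\mathcal{F}_n$ grows sub-linearly in $n$ and (b) the DP prior mass of $\mathcal{F}_n^c$ decays faster than $e^{-\beta n}$ for some $\beta>0$. Using Gaussian tail bounds on the base measure and the concentration properties of the DP (e.g., that $P([-a_n,a_n])$ concentrates around its base-measure counterpart), both requirements are met. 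Hoeffding-type tests on the sieve then yield exponentially consistent tests for total-variation neighborhoods of $f_0$, and the standard Schwartz-style decomposition (numerator bounded via the test + entropy bound, denominator lower bounded via KL support) delivers strong consistency.

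Finally, with strong consistency of $\Pi$ at $f_0$ established, \Cref{thm:consistent-predictive} immediately yields $d_{TV}(\widehat{P}_n, P_{f_0}) \to 0$ almost surely under $P_{f_0}^{\infty}$, which is the conclusion. The main obstacle is the sieve/entropy step for strong consistency: verifying that the DP prior mass of the sieve complement decays exponentially while controlling the bracketing entropy requires some care, though the concrete Gaussian setting here (fixed kernel variance, Gaussian base measure) places us squarely within the regime handled by \citet{ghosal1999posterior} and subsequent refinements \citep{lijoi2005consistency}, which may be cited to shortcut the most technical estimates.
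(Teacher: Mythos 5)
Your proposal is correct and shares the paper's overall architecture --- establish strong consistency of the posterior at $f_0$ and then invoke \Cref{thm:consistent-predictive} to transfer this to the predictive distribution --- but it takes a genuinely different route to the strong-consistency step. The paper simply verifies the three hypotheses of \citet[Theorem 1]{lijoi2005consistency}: (i) the Kullback--Leibler support condition, checked via \citet[Theorem 3]{ghosal1999posterior} using the fact that the true mixing measure $\sum_i p_i \delta_{\theta_i}$ is compactly supported inside the support of the Gaussian base measure; (ii) a first-moment condition $\int |\theta|\,\mathcal{N}(\theta \mid 0,\sigma_0^2)\,d\theta < \infty$ on the base measure; and (iii) a bandwidth condition that is vacuous here because $\sigma_1$ is fixed at its true value. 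The point of the Lijoi--Pr\"unster--Walker result is precisely that it replaces the sieve/entropy machinery with this easily checked moment condition. You instead re-derive strong consistency from first principles: Schwartz's theorem for weak consistency from the KL support condition, followed by the Ghosal--Ghosh--Ramamoorthi sieve construction (truncating the mixing measure to $[-a_n,a_n]$, bounding bracketing $L_1$-entropy, and showing exponential decay of the prior mass of the sieve complement via Gaussian tails of the base measure). This is the standard older argument and it does go through in this fixed-bandwidth Gaussian setting, but you leave its two hardest steps (the entropy bound and the prior-mass bound on the sieve complement) as sketches, and your claim that convolution is ``continuous from the weak topology to KL'' needs the compact support of the true mixing measure to be made rigorous --- which is exactly the hypothesis the paper isolates. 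In short, both arguments are valid; the paper's is shorter and fully checkable by citation, while yours is more self-contained in principle but defers the most technical estimates to the same references anyway.
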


\begin{proof}[{Proof of \Cref{thm:DPMM-consistent-finite-mixture}}]
First, we can rewrite the DP mixture model as a generative model over continuous densities $f$ 
\begin{equation} \label{eq:dpmm-as-density}
\begin{aligned}
	\widehat{P} &\sim \distDP{\alpha}{\mathcal{N}(0,\sigma_0^2)} \\
	f &= \mathcal{N}(0,\sigma_1^2) \ast \widehat{P} 
	\\
	X_i \mid f &\stackrel{iid}{\sim} f & & i = 1,2,\ldots,n
\end{aligned}
\end{equation}
where $\mathcal{N}(0,\sigma_1^2) \ast \widehat{P}$ is a convolution, with density $f(x) \defined \int_{\theta} \mathcal{N}(x - \theta | 0,\sigma_1^2) d\widehat{P}(\theta)$. 
	
The main idea is showing that the posterior $\Pi(f|X_{1:n})$ is strongly consistent and then leveraging \Cref{thm:consistent-predictive}. For the former, we verify the conditions of \citet[Theorem 1]{lijoi2005consistency}. 

The first condition of \citet[Theorem 1]{lijoi2005consistency} is that $f_0$ is in the K-L support of the prior over $f$ in \Cref{eq:dpmm-as-density}. We use \citet[Theorem 3]{ghosal1999posterior}. Clearly $f_0$ is the convolution of the normal density $\mathcal{N}(0,\sigma_1^2)$ with the distribution $P(.) = \sum_{i=1}^m p_i \delta_{\theta_i}$. $P(.)$ is compactly supported since $m$ is finite. Since the support of $P(.)$ is the set $\{\theta_i\}_{i=1}^{m}$ which belongs in $\mathbb{R}$, the support of $\mathcal{N}(0,\sigma_0^2)$, by \citet[Theorem 3.2.4]{ghosh2003bayesian}, the conditions on $P$ are satisfied. The condition that the prior over bandwidths cover the true bandwidth is trivially satisfied since we perfectly specified $\sigma_1$. 

The second condition of \citet[Theorem 1]{lijoi2005consistency} is simple: because the prior over $\widehat{P}$ is a DP, it reduces to checking that
\begin{equation*}
	\int_{\mathbb{R}} |\theta| \mathcal{N}(\theta \mid 0, \sigma_0^2) < \infty
\end{equation*}
which is true. 

The final condition trivial holds because we have perfectly specified $\sigma_1$: there is actually zero probability that $\sigma_1$ becomes too small, and we never need to worry about setting $\gamma$ or the sequence $\sigma_k$.  
\end{proof}

\subsection{Predictive density plots for varying N}

In \Cref{fig:predictives}, the distance between the posterior predictive density and the underlying density decreases as $N$ increases. 
We sampled a grid $\{u_j\}$ of $150$ evenly-spaced points in the domain $[-20, 30]$, and evaluated both the true density and the posterior predictive density on this grid.
The distance in question sums over the absolute differences between the evaluations over the grid
\begin{equation*}
	\text{dist} \defined \sum_{j} | f_N(u_j) - f_0(u_j)|. 
\end{equation*}
where $f_N(u_j)$ is the posterior predictive density of the $N$ observations under the DPMM at $u_j.$
The distance is meant to illustrate \emph{pointwise} rather than total variation convergence.
Although the predictive density converges in total variation to the underlying density, it is only guaranteed that a subsequence of the predictive density converges pointwise to the underlying density. 

\begin{figure}[htbp]
	\floatconts
	{fig:predictives}
	{\caption{Posterior predictive density for different $N$. The time budget for each replicate when $N = 100, 200,300$ is respectively $100, 300, 800$ seconds. We average the results from $400$ replicates.}}
	{%
		\subfigure[$N = 100$.]{\label{sub-fig:predictivendata100}%
			\includegraphics[width=0.33\linewidth]{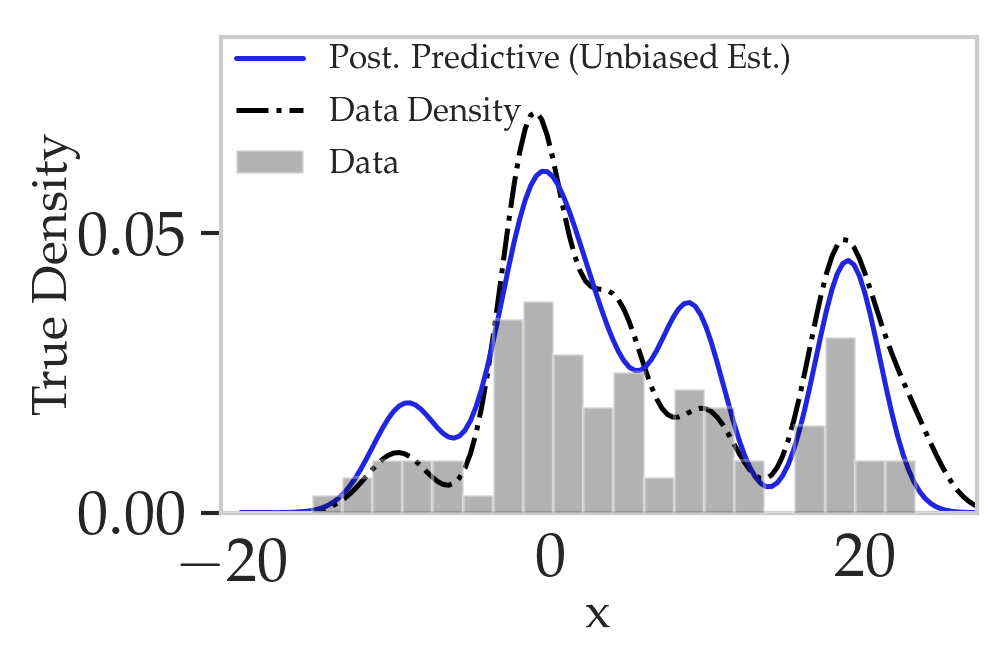}}%
		\subfigure[$N = 200$.]{\label{sub-fig:predictivendata200}%
			\includegraphics[width=0.33\linewidth]{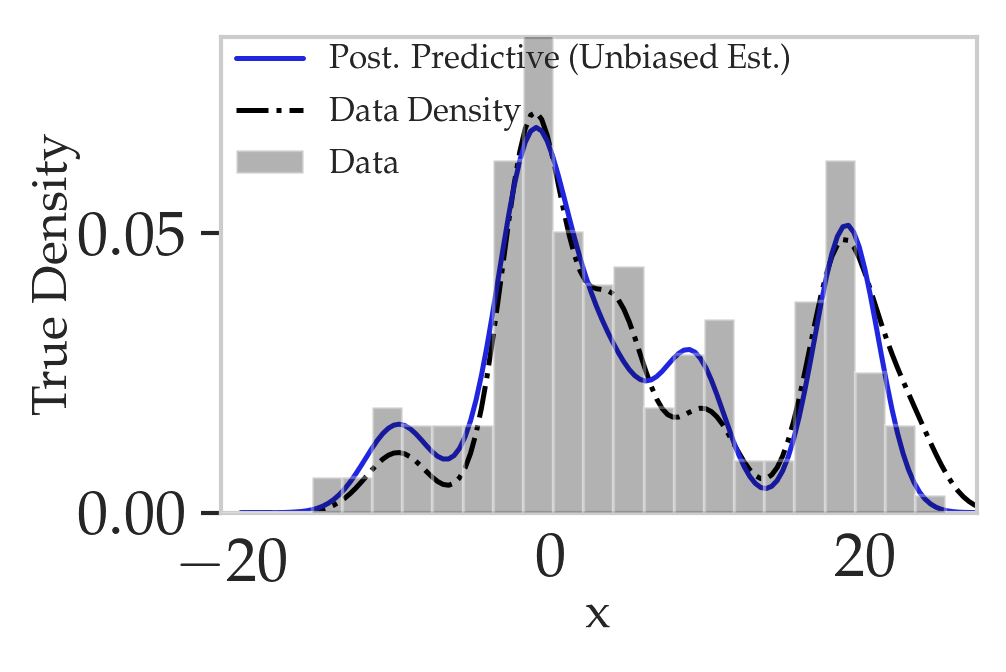}}%
		\subfigure[$N = 300$.]{\label{sub-fig:predictivendata300}%
			\includegraphics[width=0.33\linewidth]{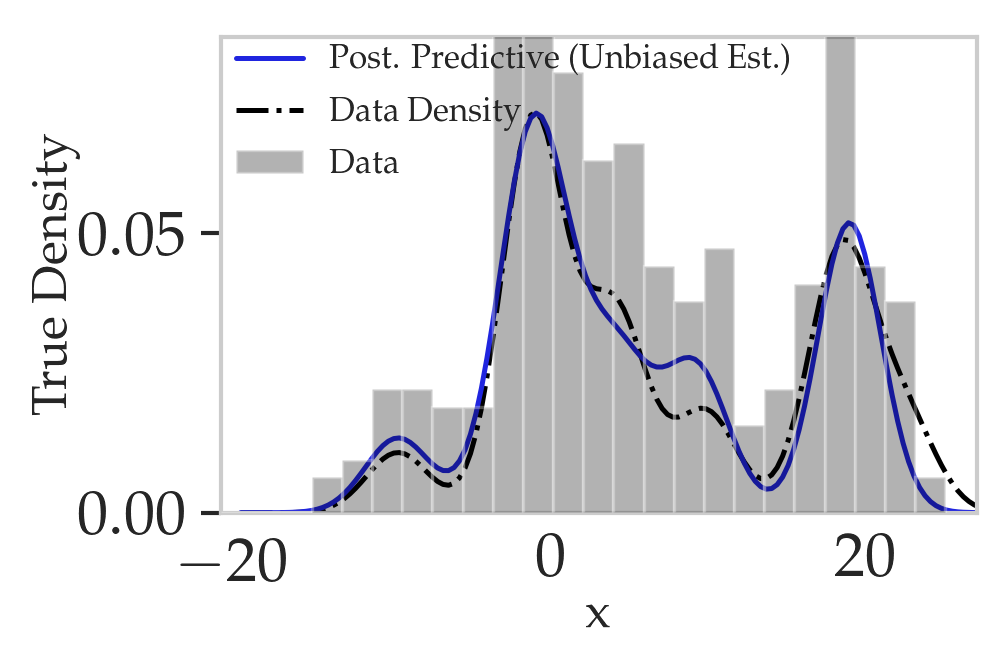}}%
	}
\end{figure}
In \Cref{fig:predictives}, each $N$ has a different time budget because for larger $N$, in general per-sweep time increases and number of sweeps until coupled chains meet also increase.

\end{document}